\newtheorem{theorem}{Theorem}[section]
\newtheorem{lemma}[theorem]{Lemma}
\newtheorem{proposition}[theorem]{Proposition}
\newtheorem{corollary}[theorem]{Corollary}
\theoremstyle{definition}
\newtheorem{definition}{Definition}[section]
\newtheorem{example}[definition]{Example}
\theoremstyle{remark}
\newtheorem{remark}{Remark}
\title{Shortest LCD embeddings of binary, ternary and quaternary linear codes}
\author{Junmin An\thanks{junmin0518@sogang.ac.kr, Department of Mathematics and Institute for Mathematical and Data Sciences, Sogang University, Seoul, Korea}, Ji-Hoon Hong\thanks{rjekfl@sogang.ac.kr, Department of Mathematics, Sogang University, Seoul, Korea}, Jon-Lark Kim\thanks{jlkim@sogang.ac.kr, Department of Mathematics and Institute for Mathematical and Data Sciences, Sogang University, Seoul, Korea}, Haeun Lim\thanks{haeunlim@sogang.ac.kr, Department of Mathematics and Institute for Mathematical and Data Sciences, Sogang University, Seoul, Korea}}
\date{}
\begin{document}
\maketitle

\begin{abstract}
In recent years, there has been active research on self-orthogonal embeddings of linear codes since they have yielded some optimal self-orthogonal codes. LCD codes have a trivial hull so they are counterparts of self-orthogonal codes. It is therefore a natural question whether one can embed linear codes into distance-optimal LCD codes. To answer it, we first determine the number of columns to be added to a generator matrix of a linear code in order to embed the given code into an LCD code. Then we characterize all possible forms of shortest LCD embeddings of a linear code. Using the shortest LCD embedding method, we find new ternary LCD codes with parameters including $[23, 4, 14]$, $[23, 5, 12]$, $[24, 6, 12]$, and $[25, 5, 14]$ and a new quaternary LCD $[21, 10, 8]$ code, each of which has minimum distance one greater than those of the known codes. This shows that our shortest LCD embedding method is useful in finding distance-optimal LCD codes over various fields.
\end{abstract}
\noindent\textbf{Keywords:} LCD codes, Hermitian inner product, optimal LCD codes, embedding codes

\section{Introduction}

In 1992, J. L. Massey~\cite{Massey-LCD-code} defined LCD codes as linear codes with complementary duals. Since he showed that, for LCD codes, the nearest-codeword decoding problem could be reduced to a simpler problem, LCD codes have been the subject of wide interest. Sendrier~\cite{Sendrier-LCD-GV} showed that LCD codes met the asymptotic Gilbert-Varshamov bound, which indicates that they have good properties from the perspective of coding theory. Liu and Wang~\cite{Liu-LCD-class} enumerated the number of Euclidean and Hermitian LCD codes over finite fields using cogredience theories of alternative matrices, symmetric matrices. One of the most notable recent studies on LCD codes is their application to cryptography. Carlet and Guilley~\cite{Carlet-LCD-side-channel} introduced an application of LCD codes against side-channel attacks (SCA) and fault injection attacks (FIA).

Carlet et al.~\cite{Carlet-LCD-equiv} have shown an interesting result that every linear code over $\mathbb{F}_q$ where $q\ge 4$ is equivalent to some Euclidean LCD code and every linear code over $\mathbb{F}_{q^2}$ where $q\ge 3$ is equivalent to some Hermitian LCD code. However, in the case of Euclidean LCD codes over $\mathbb{F}_2$, $\mathbb{F}_3$ and Hermitian LCD codes over $\mathbb{F}_4$, the bounds for minimum distances are unknown for many lengths and dimensions.

Just as determining optimal minimum distances is central in coding theory, the same question is also fundamental to LCD codes. There has been extensive research in determining the largest minimum distances of LCD codes. Galvez et al.~\cite{Galvez-bound-LCD} determined the largest minimum distances of binary LCD codes for $n\le 12$. For binary LCD codes with $n\le 50$, partial results have been established in~\cite{Araya-optimal-LCD-2020,Bouyuklieva-optimal-LCD-2021,Harada-optimal-LCD-2019,Ishizuka-optimal-LCD-2023,Li-optimal-LCD-2024-1,Wang-optimal-LCD-2024-2}. For ternary LCD codes and quaternary Hermitian LCD codes, some largest minimum distances have been determined for lengths $n\le 25$ in~\cite{Araya-ternary-LCD-2020,Araya-ternary-LCD-2021,Lu-quaternary-LCD,Li-optimal-LCD-2024-1}. However, many cases still remain open, providing the motivation for us to study this problem.

A commonly used approach to determine the largest minimum distance of LCD codes is to explicitly construct codes meeting the best known bounds. Various construction methods have been proposed in~\cite{Araya-ternary-LCD-2021,Bouyuklieva-optimal-LCD-2021,Harada-optimal-LCD-2019,Ishizuka-optimal-LCD-2023,Kim-building-up,Lu-quaternary-LCD,Li-optimal-LCD-2024-1,Wang-optimal-LCD-2024-2}, and these methods have yielded new LCD codes achieving previously unknown largest minimum distances.

In this paper, we introduce an LCD embedding method for linear codes and use it to construct new LCD codes. Our approach is motivated by the following papers. Kim et al.~\cite{Kim-SO-embedding-2,Kim-SO-embedding-1} introduced the embedding approach to find new self-orthogonal codes. An et al.~\cite{An-SO-embedding} extended this method to arbitrary dimensions and obtained several new self-orthogonal codes.
Although an LCD embedding of a linear code is somewhat different from a self-orthogonal embedding of a linear code, we can determine the minimum number of columns (to be added) in order to embed a linear code into an LCD code, and characterize what kinds of columns or matrices should be added. Using our method, we have constructed four new ternary LCD codes with parameters including  $[23, 4, 14]$, $[23, 5, 12]$, $[24, 6, 12]$, and $[25, 5, 14]$, as well as a new quaternary LCD $[21, 10, 8]$ code.

The paper is organized as follows. In Section 2, we provide some preliminary concepts and notations in coding theory. In Section 3, we introduce the notion of LCD embedding, in particular, a shortest LCD embedding of a linear code. In Section 4, we construct new LCD codes by using our LCD embedding method applied to some distance-optimal linear codes. We conclude our paper in Section 5.

\section{Preliminaries}

In this section, we introduce basic concepts in coding theory that will be used in the subsequent developments. The books \cite{Huffman-Coding2} and \cite{Macwilliams-Coding1} provide general references on coding theory.

Let $\mathbb{F}_q$ be the finite field of order $q$ where $q$ is a prime power. A \textit{code} $\mathcal{C}$ of length $n$ over $\mathbb{F}_q$ is a subset of $\mathbb{F}_q^n$. If $\mathcal{C}$ is a subspace of $\mathbb{F}_q^n$, then we call $\mathcal{C}$ an $[n, k]$ \textit{linear code} where $k$ is the dimension of $\mathcal{C}$ over $\mathbb{F}_q$. Elements of a code $\mathcal{C}$ are called \textit{codewords}. A \textit{generator matrix} $G(\mathcal{C})$ of a code $\mathcal{C}$ is a matrix over $\mathbb{F}_q$ whose rows form a basis for $\mathcal{C}$. A monomial matrix is a matrix in which each row and each column contains exactly one nonzero entry. Two (linear) codes $\mathcal{C}_1$ and $\mathcal{C}_2$ over $\mathbb{F}_q$ are called \textit{equivalent} if there is a monomial matrix $M$ such that $\mathcal{C}_2=\mathcal{C}_1M$.

For two vectors $\mathbf{u}=(u_1, \ldots, u_n)$ and $\mathbf{v}=(v_1, \ldots, v_n)$ in $\mathbb{F}_q^n$, the \textit{Euclidean inner product} between them is defined as
\[
\langle \mathbf{u}, \mathbf{v}\rangle_E=\sum_{i=1}^nu_iv_i.
\]
Similarly, for vectors $\mathbf{u}$ and $\mathbf{v}$ in $\mathbb{F}_{q^2}^n$, the \textit{Hermitian inner product} between $\mathbf{u}$ and $\mathbf{v}$ is defined as
\[
\langle \mathbf{u}, \mathbf{v}\rangle_H=\sum_{i=1}^nu_iv_i^q.
\]
For a code $\mathcal{C}$ of length $n$ over $\mathbb{F}_q$, the \textit{Euclidean dual} of $\mathcal{C}$ is defined as
\[
\mathcal{C}^{\perp_E}=\{\mathbf{x}\in\mathbb{F}_q^n\mid\langle\mathbf{x}, \mathbf{c}\rangle_E=0\text{ for all }\mathbf{c}\in\mathcal{C}\},
\]
and for a code $\mathcal{C}$ over $\mathbb{F}_{q^2}$, the \textit{Hermitian dual} of $\mathcal{C}$ is defined as
\[
\mathcal{C}^{\perp_H}=\{\mathbf{x}\in\mathbb{F}_{q^2}^n\mid\langle\mathbf{x}, \mathbf{c}\rangle_H=0\text{ for all }\mathbf{c}\in\mathcal{C}\}.
\]
The notions we introduce next apply to both the Euclidean dual and the Hermitian dual. We use the notation $\mathcal{C}^\perp$ to denote either the Euclidean dual or the Hermitian dual of $\mathcal{C}$. A linear code $\mathcal{C}$ is called \textit{self-orthogonal} if $\mathcal{C}\subseteq \mathcal{C}^{\perp}$, and is called \textit{self-dual} if $\mathcal{C}=\mathcal{C}^\perp$. For a linear code $\mathcal{C}$, the \textit{hull} of $\mathcal{C}$ is defined as the intersection of $\mathcal{C}$ and $\mathcal{C}^\perp$, and is denoted by $\text{Hull}(\mathcal{C})$. If $\mathcal{C}$ is self-orthogonal, then $\mathcal{C}=\text{Hull}(\mathcal{C})$. We call $\mathcal{C}$ an \textit{LCD code} if $\text{Hull}(\mathcal{C})=\{\mathbf{0}\}$.

For a linear code $\mathcal{C}$ over $\mathbb{F}_q$, $\text{Hull}(\mathcal{C})$ is also a linear code over $\mathbb{F}_q$. The following theorem gives the dimension of $\text{Hull}(\mathcal{C})$.

\begin{theorem}[\cite{Li-hull-dim}] \label{hull-formula}
    Let $\mathcal{C}$ be an $[n, k]$ linear code over $\mathbb{F}_q$ with generator matrix $G$. Then the dimension $\ell$ of the Euclidean hull is given by
    \[
    \ell=k-\rm{rank}(GG^T),
    \]
    where $G^T$ is the transpose of $G$. Furthermore, for a linear $[n, k]$ code $\mathcal{C}$ over $\mathbb{F}_{q^2}$, the dimension $\ell_h$ of the Hermitian hull is given by
    \[
    \ell_h=k-\rm{rank}(G\sigma(G)^T),
    \]
    where $\sigma(G)$ is the matrix obtained by taking the $q$-th power of each entry of $G$.
\end{theorem}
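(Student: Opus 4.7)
The plan is to translate membership in $\mathrm{Hull}(\mathcal{C})$ into a concrete linear-algebraic condition on coordinate vectors in $\mathbb{F}_q^k$ (resp.\ $\mathbb{F}_{q^2}^k$) and then apply the rank-nullity theorem to the Gram-type matrix $GG^T$ (resp.\ $G\sigma(G)^T$).

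First I would parameterize codewords: every $\mathbf{c}\in\mathcal{C}$ can be written uniquely as $\mathbf{c}=\mathbf{x}G$ with $\mathbf{x}\in\mathbb{F}_q^k$, since $G$ has full row rank $k$. In particular the map $\varphi:\mathbb{F}_q^k\to\mathcal{C}$, $\mathbf{x}\mapsto\mathbf{x}G$, is a $\mathbb{F}_q$-linear isomorphism. Next I would rewrite the orthogonality condition. In the Euclidean case, $\mathbf{c}\in\mathcal{C}^{\perp_E}$ iff $\langle\mathbf{c},\mathbf{g}_i\rangle_E=0$ for every row $\mathbf{g}_i$ of $G$, which is equivalent to the single matrix equation $\mathbf{c}G^T=\mathbf{0}$; substituting $\mathbf{c}=\mathbf{x}G$ turns this into $\mathbf{x}(GG^T)=\mathbf{0}$. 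Therefore $\mathrm{Hull}(\mathcal{C})=\varphi(\ker(GG^T))$, where $\ker$ denotes the left null space of the $k\times k$ matrix $GG^T$. Since $\varphi$ is an isomorphism, the rank-nullity theorem applied to $GG^T$ yields
\[
\ell=\dim\mathrm{Hull}(\mathcal{C})=\dim\ker(GG^T)=k-\mathrm{rank}(GG^T),
\]
which is the first formula.

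For the Hermitian statement I would carry out the same argument but with the conjugate-linear pairing $\langle\mathbf{u},\mathbf{v}\rangle_H=\sum_i u_i v_i^q$. The key observation is that $\langle\mathbf{c},\mathbf{g}_i\rangle_H=0$ for all rows of $G$ is equivalent to $\mathbf{c}\,\sigma(G)^T=\mathbf{0}$, because raising each entry of $G$ to the $q$-th power exactly implements the conjugation on the second argument. Plugging in $\mathbf{c}=\mathbf{x}G$ gives $\mathbf{x}(G\sigma(G)^T)=\mathbf{0}$, so the Hermitian hull corresponds under $\varphi$ to the left null space of the $k\times k$ matrix $G\sigma(G)^T$, and rank-nullity finishes the proof.

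The only delicate point, and the step I would flag as most error-prone rather than genuinely hard, is the Hermitian bookkeeping: one must keep track of which factor is being conjugated so that $\sigma(G)^T$ (and not $\sigma(G^T)$ placed on the wrong side) appears, and one must verify that the left null space over $\mathbb{F}_{q^2}$ of $G\sigma(G)^T$ really is an $\mathbb{F}_{q^2}$-subspace of dimension $k-\mathrm{rank}(G\sigma(G)^T)$. This is a standard verification since $G\sigma(G)^T$ is an ordinary matrix over $\mathbb{F}_{q^2}$ and rank-nullity over a field is unaffected by the presence of a Frobenius twist in its definition. No further obstacles arise, so the entire proof reduces to these two parallel rank-nullity computations.
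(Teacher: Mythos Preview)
The paper does not actually prove this theorem; it is quoted from \cite{Li-hull-dim} without proof, so there is no ``paper's own proof'' to compare against. Your argument is correct and is the standard one: identifying $\mathrm{Hull}(\mathcal{C})$ with the image under $\mathbf{x}\mapsto\mathbf{x}G$ of the left null space of $GG^T$ (resp.\ $G\sigma(G)^T$) and invoking rank--nullity is exactly how this result is usually established.
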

It follows that if $\mathcal{C}$ is Euclidean (resp. Hermitian) self-orthogonal, then the rank of $GG^T$ (resp. $G\sigma(G)^T$) equals $0$. On the other hand, if $\mathcal{C}$ is a Euclidean (resp. Hermitian) LCD code, then $k=\text{rank}(GG^T)$ (resp. $k=\text{rank}(G\sigma(G)^T)$), which means that $GG^T$ (resp. $G\sigma(G)^T$) is an invertible matrix.

The \textit{Hamming weight} of a vector $\mathbf{u}$ in $\mathbb{F}_q^n$ is defined as the number of nonzero coordinates of $\mathbf{u}$. The \textit{Hamming distance} between two vectors $\mathbf{u}$ and $\mathbf{v}$ in $\mathbb{F}_q^n$ is defined as the Hamming weight of $\mathbf{u}-\mathbf{v}$. The \textit{minimum (Hamming) distance} of a code $\mathcal{C}$ over $\mathbb{F}_q$ is the minimum of the distances between any two distinct codewords in $\mathcal{C}$. The \textit{minimum (Hamming) weight} of a code $\mathcal{C}$ over $\mathbb{F}_q$ is the minimum of the weights of any nonzero codeword in $\mathcal{C}$. If $\mathcal{C}$ is linear, then the minimum distance of $\mathcal{C}$ and the minimum weight of $\mathcal{C}$ are equal.

A linear $[n, k]$ code $\mathcal{C}$ over $\mathbb{F}_q$ is called \textit{distance-optimal} if its minimum distance is the largest among all linear codes of the same length and dimension. Similarly, an $[n, k]$ LCD code is called a \textit{distance-optimal LCD code} if it achieves the largest minimum distance among all $[n, k]$ LCD codes over $\mathbb{F}_q$. We denote the minimum distance of a distance-optimal $[n, k]$ LCD code over $\mathbb{F}_q$ by $d_{\text{LCD}}(n, k)$.

In this paper, we mainly consider $\mathbb{F}_2$, $\mathbb{F}_3$ and $\mathbb{F}_4$. For codes over $\mathbb{F}_2$ and $\mathbb{F}_3$, the dual refers to the Euclidean dual, while for codes over $\mathbb{F}_4$, the dual refers to the Hermitian dual. Similarly, LCD codes over $\mathbb{F}_2$ and $\mathbb{F}_3$ refer to the Euclidean LCD codes, and LCD codes over $\mathbb{F}_4$ refer to the Hermitian LCD codes. With these conventions, equivalent codes have hulls of the same dimension.

\section{Shortest LCD embeddings of codes}

Throughout this section, let $F$ denote $\mathbb{F}_2$, $\mathbb{F}_3$ or $\mathbb{F}_4$. For a matrix $G$ over $F$, we define $G^*$ as follows: $G^*=G^T$ if $F$ is $\mathbb{F}_2$ or $\mathbb{F}_3$, and $G^*=\sigma(G)^T$ if $F$ is $\mathbb{F}_4$. Here, $\sigma(G)$ denotes the matrix obtained by squaring each entry of $G$.

\begin{definition}
    Let $\mathcal{C}$ be an $[n, k]$ code over $F$. Then an $[n', k]$ LCD code $\tilde{\mathcal{C}}$ over $F$ with $n'\ge n$ is called an \textit{LCD embedding} of $\mathcal{C}$ if there is a subset $S$ of coordinates of $\tilde{\mathcal{C}}$ such that $\mathcal{C}$ is obtained by puncturing $\tilde{\mathcal{C}}$ on the coordinates in $S$.
\end{definition}

It follows from \cite[Theorem 15]{Liu-LCD-class} and \cite[Proposition 2]{Massey-LCD-code} that every $[n,k]$ code over $F$ has an LCD embedding. Let $\tilde{\mathcal{C}}$ be an LCD embedding of an $[n, k]$ code $\mathcal{C}$ over $F$. We call $\tilde{\mathcal{C}}$ a \textit{shortest LCD embedding} of $\mathcal{C}$ if its length is minimal among all LCD embeddings of $\mathcal{C}$.

\begin{theorem}\label{shortestLCD}
    Let $\mathcal{C}$ be an $[n, k]$ code over $F$ with $\ell=\dim({\rm Hull}(\mathcal{C}))$. Then the length of a shortest LCD embedding of $\mathcal{C}$ is $n+\ell$.
\end{theorem}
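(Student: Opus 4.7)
I would prove matching inequalities: every LCD embedding has length at least $n+\ell$, and an LCD embedding of length exactly $n+\ell$ can be constructed explicitly. Throughout, Theorem~\ref{hull-formula} is the pivotal tool: it identifies $\ell$ with the dimension of the left null space of $GG^*$.

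\textbf{Lower bound.} Let $\tilde{\mathcal C}$ be an LCD embedding of length $n+m$. After permuting coordinates (which preserves the LCD property) so the punctured positions are the last $m$, and applying row operations, I may bring a generator matrix of $\tilde{\mathcal C}$ to the block form
\[
\tilde G=\begin{pmatrix} G & A_1\\ \mathbf 0 & A_2\end{pmatrix},
\]
where $G$ is a generator matrix of $\mathcal C$ (permuting the first $n$ coordinates if necessary). Block multiplication gives
\[
\tilde G\tilde G^*=\begin{pmatrix} GG^*+A_1A_1^* & A_1A_2^*\\ A_2A_1^* & A_2A_2^*\end{pmatrix},
\]
and the LCD condition is precisely that this matrix is invertible. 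By Theorem~\ref{hull-formula}, the left null space $N:=\{\mathbf v\in F^k:\mathbf vGG^*=\mathbf 0\}$ has dimension $\ell$. For any nonzero $\mathbf v\in N$, left-multiplying $\tilde G\tilde G^*$ by $(\mathbf v,\mathbf 0)$ produces $(\mathbf vA_1A_1^*,\;\mathbf vA_1A_2^*)$; invertibility forces $\mathbf vA_1\neq\mathbf 0$. Hence $\mathbf v\mapsto\mathbf vA_1$ is an injection $N\hookrightarrow F^m$, yielding $m\geq\ell$.

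\textbf{Upper bound (construction).} Choose a basis $v_1,\dots,v_k$ of $\mathcal C$ whose first $\ell$ vectors form a basis of $\operatorname{Hull}(\mathcal C)$, and let $G$ be the corresponding generator matrix. Since $(GG^*)_{ij}=\langle v_i,v_j\rangle$ and any inner product involving a hull vector vanishes (using, in the Hermitian case, the identity $\langle v_i,v_j\rangle_H=\langle v_j,v_i\rangle_H^{\,q}$ to also kill the columns), $GG^*$ has the block form $\operatorname{diag}(0_\ell,M)$ with $M$ of size $(k-\ell)\times(k-\ell)$. Theorem~\ref{hull-formula} gives $\operatorname{rank}(GG^*)=k-\ell$, so $M$ is invertible. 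Now set
\[
A=\begin{pmatrix}I_\ell\\ \mathbf 0\end{pmatrix}\in F^{k\times\ell},
\]
so that $AA^*=\operatorname{diag}(I_\ell,0)$. Then $\tilde G\tilde G^*=GG^*+AA^*=\operatorname{diag}(I_\ell,M)$ is invertible, and the code generated by $[G\mid A]$ is an LCD embedding of $\mathcal C$ of length $n+\ell$.

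\textbf{Main obstacle.} The main subtlety in the lower bound is reducing an arbitrary LCD embedding (whose dimension $k'$ may exceed $k$ whenever there exist codewords supported entirely on the punctured coordinates) to the clean block form displayed above, after which the injectivity argument is immediate. In the upper bound, the non-obvious step is verifying that the zero block of $GG^*$ is truly square: the row of zeros must be matched by a corresponding column of zeros, which is automatic in the Euclidean case but requires the Hermitian symmetry $\langle v_j,v_i\rangle_H=\langle v_i,v_j\rangle_H^{\,q}$ in the quaternary case.
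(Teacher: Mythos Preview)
Your proof is correct and structurally matches the paper's: both establish the lower bound by a rank/nullity argument on $\tilde G\tilde G^*$ and the upper bound by the same explicit construction (append $\begin{psmallmatrix}I_\ell\\ \mathbf 0\end{psmallmatrix}$ to a generator matrix whose first $\ell$ rows span the hull). The upper-bound halves are essentially identical.

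The lower bounds differ in two respects. First, the paper assumes without comment that a shortest LCD embedding has the same dimension $k$ as $\mathcal C$, writing $\tilde G=[G\mid D]$ with $D$ of size $k\times m$; you instead allow $k'=\dim\tilde{\mathcal C}>k$ via the block form $\begin{psmallmatrix}G&A_1\\ \mathbf 0&A_2\end{psmallmatrix}$, which is more faithful to the definition of an LCD embedding (puncturing can decrease dimension, so an embedding need not have dimension $k$). Second, the paper's inequality is pure rank subadditivity,
\[
k=\operatorname{rank}(GG^*+DD^*)\le\operatorname{rank}(GG^*)+\operatorname{rank}(D)\le(k-\ell)+m,
\]
whereas you argue that $\mathbf v\mapsto\mathbf vA_1$ injects the $\ell$-dimensional null space of $GG^*$ into $F^m$. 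The paper's argument is a one-liner once the dimension issue is set aside; your argument is slightly longer but closes that gap and is no harder. Either way the content is the same rank--nullity fact viewed from opposite ends.
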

\begin{proof}
Let $G$ be a generator matrix of $\mathcal{C}$. Suppose that the length of a shortest LCD embedding $\tilde{\mathcal{C}}$ of $\mathcal{C}$ is $n+m$. Then, there is a $k\times m$ matrix $D$ such that $\tilde{G}=[G\mid D]$ generates $\tilde{\mathcal{C}}$. Since $\tilde{\mathcal{C}}$ is an LCD code, we have
\begin{align*}
k=\text{rank}(\tilde{G}\tilde{G}^*)&=\text{rank}(GG^*+DD^*)\\&\le \text{rank}(GG^*)+\text{rank}(DD^*)\\&\le \text{rank}(GG^*)+\text{rank}(D) \\ 
& \le (k-\ell)+m {\mbox{~(by Theorem~\ref{hull-formula})}}.
\end{align*}
Hence we have $m\ge \ell$.

Conversely, let $G'$ be a generator matrix of $\mathcal{C}$ of the form
\[
G'=\begin{bmatrix}
    G(\text{Hull}(\mathcal{C}))\\A
\end{bmatrix}.
\]
Let us take
\[
D=\begin{bmatrix}
    I_\ell\\\mathcal{O}
\end{bmatrix},
\]
where $I_\ell$ is the $\ell\times \ell$ identity matrix. Then we have
\[
[G'\mid D][G'\mid D]^*=\begin{bmatrix}
    G(\text{Hull}(\mathcal{C})) & I_\ell\\A & \mathcal{O}
\end{bmatrix}\begin{bmatrix}
    G(\text{Hull}(\mathcal{C}))^* & A^*\\I_\ell & \mathcal{O}
\end{bmatrix}=\begin{bmatrix}
    I_\ell & \mathcal{O}\\\mathcal{O} & AA^*
\end{bmatrix},
\]
which is clearly invertible. It follows that $[G'\mid D]$ generates an LCD code. Therefore, we have $m=\ell$.
\end{proof}
As one can see in the proof, Theorem~\ref{shortestLCD} applies not only to codes over $\mathbb{F}_2$, $\mathbb{F}_3$, and $\mathbb{F}_4$, but also to codes over any finite field.

\begin{corollary}
    Let $\mathcal{C}$ be an $[n, k]$ code over $F$ with $\ell=\dim({\rm Hull}(\mathcal{C}))$. Then the length of a shortest LCD embedding of $\mathcal{C}$ is equal to the length of a shortest LCD embedding of ${\rm Hull}(\mathcal{C})$.
\end{corollary}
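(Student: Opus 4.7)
The plan is to reduce everything to Theorem~\ref{shortestLCD}, which gives a clean closed form $n+\ell$ for the length of a shortest LCD embedding in terms of the length $n$ of the code and the dimension $\ell$ of its hull. So I just need to compute these two quantities for $\mathcal{C}$ and for ${\rm Hull}(\mathcal{C})$ and check they give the same sum.

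For $\mathcal{C}$ itself, the theorem immediately yields $n+\ell$. For ${\rm Hull}(\mathcal{C})$, the length parameter is still $n$ since the hull is a subcode of $\mathcal{C}\subseteq F^n$, so the only thing to determine is the dimension of its own hull. The key observation is that ${\rm Hull}(\mathcal{C})$ is always a self-orthogonal code: every vector $\mathbf{x}\in{\rm Hull}(\mathcal{C})$ lies in $\mathcal{C}^\perp$, hence is orthogonal to every vector in $\mathcal{C}$, and in particular to every vector in the subspace ${\rm Hull}(\mathcal{C})\subseteq\mathcal{C}$. Therefore ${\rm Hull}(\mathcal{C})\subseteq{\rm Hull}(\mathcal{C})^\perp$, which forces ${\rm Hull}({\rm Hull}(\mathcal{C}))={\rm Hull}(\mathcal{C})$, an $\ell$-dimensional space.

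Applying Theorem~\ref{shortestLCD} a second time to the $[n,\ell]$ code ${\rm Hull}(\mathcal{C})$, whose hull dimension is $\ell$, gives a shortest LCD embedding of length $n+\ell$, matching the length computed for $\mathcal{C}$. This finishes the proof.

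There is essentially no obstacle here; the only point to verify carefully is that the hull is self-orthogonal, and this follows directly from the definition in both the Euclidean and Hermitian settings, since the argument above only uses that $\mathbf{x}\in{\rm Hull}(\mathcal{C})$ means $\langle\mathbf{x},\mathbf{c}\rangle=0$ for all $\mathbf{c}\in\mathcal{C}$, which works for either inner product considered in this paper.
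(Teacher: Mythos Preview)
Your proof is correct and follows essentially the same approach as the paper: both apply Theorem~\ref{shortestLCD} to $\mathcal{C}$ and to ${\rm Hull}(\mathcal{C})$ and observe that the resulting lengths coincide at $n+\ell$. You simply make explicit the (easy) fact that ${\rm Hull}(\mathcal{C})$ is self-orthogonal and hence equals its own hull, a step the paper leaves implicit.
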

\begin{proof}
    By Theorem~\ref{shortestLCD}, the lengths of shortest LCD embeddings of $\mathcal{C}$ and $\text{Hull}(\mathcal{C})$ are both $n+\ell$.
\end{proof}

\begin{remark}\label{Hamming-743}
    Let $\mathcal{C}$ be an $[n, k]$ binary code with generator matrix
    \[
    G=\begin{bmatrix}
    G(\text{Hull}(\mathcal{C}))\\A
\end{bmatrix}.
    \]
    Not every shortest LCD embedding of $\mathcal{C}$ has a generator matrix of the form
    \begin{equation}\label{stform}
    \tilde{G}=\begin{bmatrix}
    G(\text{Hull}(\mathcal{C})) & D\\A & \mathcal{O}
    \end{bmatrix}
    \end{equation}
    for some matrix $D$ over $F$.
    Consider the binary $[7, 4, 3]$ Hamming code $\mathcal{H}_3$ with generator matrix
    \[
    G=\begin{bmatrix}
        1 & 0 & 1 & 0 & 1 & 0 & 1\\0 & 1 & 1 & 0 & 1 & 1 & 0\\0 & 0 & 0 & 1 & 1 & 1 & 1\\1 & 0 & 0 & 0 & 1 & 1 & 0
    \end{bmatrix}
    \]
    whose first 3 rows form a generator matrix of the $[7, 3, 4]$ simplex code. By adding three columns, we obtain the following generator matrix for a shortest LCD embedding:
    \[
    \tilde{G}=\begin{bmatrix}
        1 & 0 & 1 & 0 & 1 & 0 & 1 & 0 & 0 & 1\\
        0 & 1 & 1 & 0 & 1 & 1 & 0 & 1 & 1 & 1\\
        0 & 0 & 0 & 1 & 1 & 1 & 1 & 0 & 1 & 0\\
        1 & 0 & 0 & 0 & 1 & 1 & 0 & 1 & 0 & 1
    \end{bmatrix},
    \]
    which generates a $[10, 4, 4]$ LCD code. This code cannot be represented by a matrix of the form given in \eqref{stform}. This is because if we append the zero vector to the $4$th row of $G$, then it would generate an LCD embedding of $\mathcal{H}_3$ with minimum distance $3$.
\end{remark}

\begin{lemma}\label{D-inv}
    Let $\mathcal{C}$ be an $[n, k]$ code over $F$ with $\ell=\dim({\rm Hull}(\mathcal{C}))$ and $\tilde{\mathcal{C}}$ be a shortest LCD embedding of $\mathcal{C}$ with generator matrix
    \[
    \tilde{G}=\begin{bmatrix}
        G({\rm Hull}(\mathcal{C})) & D\\A & C
    \end{bmatrix}.
    \]
    Then $D$ is invertible.
\end{lemma}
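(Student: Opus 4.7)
The plan is to compute the Gram-type product $\tilde{G}\tilde{G}^{*}$ as a $2\times 2$ block matrix and exploit the invertibility forced by $\tilde{\mathcal{C}}$ being LCD. Writing the blocks with $G(\text{Hull}(\mathcal{C}))$ of size $\ell\times n$, $A$ of size $(k-\ell)\times n$, $D$ of size $\ell\times \ell$, and $C$ of size $(k-\ell)\times \ell$, the product is
\[
\tilde{G}\tilde{G}^{*}=\begin{bmatrix} G(\text{Hull}(\mathcal{C}))G(\text{Hull}(\mathcal{C}))^{*}+DD^{*} & G(\text{Hull}(\mathcal{C}))A^{*}+DC^{*}\\ AG(\text{Hull}(\mathcal{C}))^{*}+CD^{*} & AA^{*}+CC^{*}\end{bmatrix}.
\]

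The next step is to use the defining properties of the hull to simplify. Since $\text{Hull}(\mathcal{C})$ is self-orthogonal, $G(\text{Hull}(\mathcal{C}))G(\text{Hull}(\mathcal{C}))^{*}=\mathcal{O}$. Since $\text{Hull}(\mathcal{C})\subseteq \mathcal{C}^{\perp}$ and the rows of $A$ lie in $\mathcal{C}$, every row of $G(\text{Hull}(\mathcal{C}))$ is orthogonal to every row of $A$, so $G(\text{Hull}(\mathcal{C}))A^{*}=\mathcal{O}$ and likewise $AG(\text{Hull}(\mathcal{C}))^{*}=\mathcal{O}$. Consequently
\[
\tilde{G}\tilde{G}^{*}=\begin{bmatrix} DD^{*} & DC^{*}\\ CD^{*} & AA^{*}+CC^{*}\end{bmatrix},
\]
and the first $\ell$ columns factor as $\begin{bmatrix} D\\ C\end{bmatrix}D^{*}$.

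Finally, suppose for contradiction that $D$ is singular. Then $D^{*}$ is also singular (it has the same rank as $D$), so there exists a nonzero $v\in F^{\ell}$ with $D^{*}v=\mathbf{0}$. Multiplying the displayed block form of $\tilde{G}\tilde{G}^{*}$ by $\begin{bmatrix} v\\ \mathbf{0}\end{bmatrix}$ gives $\begin{bmatrix} D\\ C\end{bmatrix}D^{*}v=\mathbf{0}$, exhibiting a nontrivial null vector of $\tilde{G}\tilde{G}^{*}$. This contradicts the LCD property of $\tilde{\mathcal{C}}$ via Theorem~\ref{hull-formula}, so $D$ must be invertible. The only slightly delicate point in this argument is justifying that the off-diagonal zeros truly follow from the hull definition (in the Hermitian case one must check that $\sigma$ interacts with transposition as expected, but this is immediate from how $G^{*}$ is defined at the start of Section 3), so the rest is essentially a block-matrix manipulation.
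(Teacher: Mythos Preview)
Your proof is correct. The core idea---assume $D$ is singular, produce a nonzero vector witnessing that $\tilde{G}\tilde{G}^{*}$ is singular, and contradict the LCD property---is the same as the paper's, though the packaging differs slightly. The paper takes a left null vector $\mathbf{u}$ of $D$ and directly exhibits the nonzero codeword $\mathbf{c}=(\mathbf{u}\,G(\mathrm{Hull}(\mathcal{C})),\mathbf{0})\in\tilde{\mathcal{C}}\cap\tilde{\mathcal{C}}^{\perp}$, appealing to the definition of LCD rather than to Theorem~\ref{hull-formula}; you instead compute the full block form of $\tilde{G}\tilde{G}^{*}$ and use a right null vector of $D^{*}$. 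Your route is a bit longer but has the side benefit that the simplified Gram matrix you derive is exactly the one used in the proof of Theorem~\ref{construct-LCD}, so the computation is not wasted.
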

\begin{proof}
    Suppose that $\text{rank}(D)< \ell$. Then, there is a nonzero vector $\mathbf{u}\in F^\ell$ such that $\mathbf{u}D=\mathbf{0}$. Let
    \[
    \mathbf{c}=\mathbf{u}\cdot[G(\text{Hull}(\mathcal{C}))~|~D]=(\mathbf{h}, \mathbf{0}),
    \]
    where $\mathbf{h}=\mathbf{u}G(\text{Hull}(\mathcal{C}))$. Since the nonzero codeword $\mathbf{c}$ is orthogonal to every row of $\tilde{G}$, which generates the LCD code $\tilde{\mathcal{C}}$, this is a contradiction. Thus $\text{rank}(D)=\ell$.
\end{proof}

\begin{lemma}[\cite{Zhang-Schur}]\label{Schur_lemma}
Let $A$, $B$, $C$, and $D$ be matrices of sizes $n\times n$, $n\times m$, $m\times n$, and $m\times m$, respectively, over a field $F$. If $\det(A)\ne 0$, then the determinant $\det(M)$ of the matrix
\[
M=\begin{bmatrix}
    A & B\\C & D
\end{bmatrix}
\]
is given by $\det (M)=\det (A)\cdot \det(D-CA^{-1}B)$.
\end{lemma}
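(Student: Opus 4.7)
The plan is to reduce the block matrix $M$ to a block-triangular form by a single elementary row operation (at the block level) and then invoke the multiplicativity of the determinant together with the standard formula for the determinant of a block-triangular matrix. Concretely, I would factor
\[
\begin{bmatrix} I_n & 0 \\ -CA^{-1} & I_m \end{bmatrix}\begin{bmatrix} A & B \\ C & D \end{bmatrix}=\begin{bmatrix} A & B \\ 0 & D-CA^{-1}B \end{bmatrix},
\]
which is a direct computation using the block product rule; the hypothesis $\det(A)\ne 0$ is used precisely to make $A^{-1}$ meaningful on the left factor.

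Next, I would take determinants of both sides. The left-hand factor is block lower triangular with identity blocks on the diagonal, so its determinant is $1$; hence
\[
\det(M)=\det\begin{bmatrix} A & B \\ 0 & D-CA^{-1}B \end{bmatrix}.
\]
To conclude, I would cite (or quickly establish) the fact that for a block upper triangular matrix with square diagonal blocks, the determinant equals the product of the determinants of those blocks. This step is the one genuine piece of work; a clean way to prove it is by Laplace expansion along the first $n$ rows, where every nonzero generalized minor must pick its column indices from $\{1,\ldots,n\}$ because the lower-left block is zero, yielding the factorization $\det(A)\cdot\det(D-CA^{-1}B)$.

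The main obstacle, if one insists on self-containment, is exactly this block-triangular determinant lemma; everything else is a formal one-line identity. Since the paper already cites \cite{Zhang-Schur}, I would present the argument in the two displayed steps above and remark that the block-triangular determinant formula is standard, obtained either by induction on the block size or by the Laplace-expansion argument sketched above. No special structure of the ground field $F$ is needed, so the same proof works verbatim over $\mathbb{F}_2$, $\mathbb{F}_3$, and $\mathbb{F}_4$, which are the cases relevant to later sections.
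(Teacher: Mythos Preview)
Your argument is correct and is exactly the standard Schur-complement computation. Note, however, that the paper does not give its own proof of this lemma: it is stated with a citation to \cite{Zhang-Schur} and used as a black box in the proof of Theorem~\ref{construct-LCD}. The factorization you wrote down,
\[
\begin{bmatrix} I_n & 0 \\ -CA^{-1} & I_m \end{bmatrix}\begin{bmatrix} A & B \\ C & D \end{bmatrix}=\begin{bmatrix} A & B \\ 0 & D-CA^{-1}B \end{bmatrix},
\]
followed by the block-triangular determinant formula, is precisely the proof one finds in that reference, so there is nothing to compare: your write-up simply fills in what the paper chose to outsource.
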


\begin{theorem}\label{construct-LCD}
    Let $\mathcal{C}$ be an $[n, k]$ code over $F$ with $\ell=\dim({\rm Hull}(\mathcal{C}))$ and generator matrix
    \[
    G=\begin{bmatrix}
    G({\rm Hull}(\mathcal{C}))\\A
    \end{bmatrix}.
    \]
    Let $\tilde{G}$ be a matrix obtained by appending an invertible $\ell\times \ell$ matrix $D$ and a $(k-\ell)\times \ell$ matrix $C$ as follows:
    \[
    \tilde{G}=\begin{bmatrix}
        G({\rm Hull}(\mathcal{C})) & D\\A & C
    \end{bmatrix}.
    \]
    Then $\tilde{G}$ generates a shortest LCD embedding of $\mathcal{C}$. Conversely, every shortest LCD embedding of $\mathcal{C}$ has a generator matrix of this form.
\end{theorem}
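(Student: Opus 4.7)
The plan is to handle the two directions separately: a direct construction-then-check for the forward direction, and a normalize-then-apply-Lemma argument for the converse.

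For the forward direction, I would verify that $\tilde{G}$ generates an LCD code; since its length is $n+\ell$, this is automatically a shortest LCD embedding by Theorem~\ref{shortestLCD}. The central computation is $\tilde{G}\tilde{G}^*$. The off-diagonal blocks $G({\rm Hull}(\mathcal{C}))A^*$ and $AG({\rm Hull}(\mathcal{C}))^*$ vanish because rows of $G({\rm Hull}(\mathcal{C}))$ lie in $\mathcal{C}^\perp$ while rows of $A$ lie in $\mathcal{C}$, and the $(1,1)$ block vanishes by self-orthogonality of the hull, so
\[
\tilde{G}\tilde{G}^*=\begin{bmatrix} DD^* & DC^* \\ CD^* & AA^*+CC^* \end{bmatrix}.
\]
Because $D$ is invertible, so is $DD^*$, and Lemma~\ref{Schur_lemma} applies to give $\det(\tilde{G}\tilde{G}^*)=\det(DD^*)\cdot\det(AA^*+CC^*-CD^*(DD^*)^{-1}DC^*)$. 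A short calculation using $(DD^*)^{-1}=(D^*)^{-1}D^{-1}$ shows $D^*(DD^*)^{-1}D=I$, so the Schur complement collapses to $AA^*$ and $\det(\tilde{G}\tilde{G}^*)=\det(DD^*)\det(AA^*)$. To finish, I would note that $GG^*=\begin{bmatrix}0 & 0\\ 0 & AA^*\end{bmatrix}$, so Theorem~\ref{hull-formula} forces $\text{rank}(AA^*)=k-\ell$, which makes the $(k-\ell)\times(k-\ell)$ matrix $AA^*$ invertible. Thus $\tilde{G}\tilde{G}^*$ is invertible and $\tilde{\mathcal{C}}$ is LCD; puncturing on the last $\ell$ coordinates returns $\mathcal{C}$.

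For the converse, let $\tilde{\mathcal{C}}$ be an arbitrary shortest LCD embedding, which by Theorem~\ref{shortestLCD} has length $n+\ell$. After permuting columns so the puncturing set occupies the final $\ell$ positions, a generator matrix takes the form $[G_0\mid D_0]$ where $G_0$ necessarily generates $\mathcal{C}$. Row-reducing the whole matrix so that $G_0$ becomes $\begin{bmatrix}G({\rm Hull}(\mathcal{C}))\\ A\end{bmatrix}$ (done by choosing a basis of ${\rm Hull}(\mathcal{C})$ and extending it to a basis of $\mathcal{C}$) and partitioning $D_0$ conformally as $\begin{bmatrix}D\\ C\end{bmatrix}$ with $D$ of size $\ell\times\ell$ produces the required shape. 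Finally, Lemma~\ref{D-inv} guarantees that this $D$ is invertible.

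The main obstacle is the Schur-complement cancellation in the forward direction: the identity $D^*(DD^*)^{-1}D=I$, which holds only because $D$ is square and invertible, is exactly what allows $CD^*(DD^*)^{-1}DC^*$ to cancel $CC^*$ and reduce $\det(\tilde{G}\tilde{G}^*)$ to the product $\det(DD^*)\det(AA^*)$. The remaining ingredients, namely self-orthogonality of the hull, orthogonality of ${\rm Hull}(\mathcal{C})$ to $\mathcal{C}$, and invertibility of $AA^*$ from Theorem~\ref{hull-formula}, are comparatively routine, but they only assemble into invertibility of $\tilde{G}\tilde{G}^*$ through this Schur identity.
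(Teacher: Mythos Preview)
Your proposal is correct and follows essentially the same route as the paper: compute $\tilde{G}\tilde{G}^*$ as a $2\times 2$ block matrix, apply Lemma~\ref{Schur_lemma} with invertible top-left block $DD^*$, simplify the Schur complement to $AA^*$, and invoke Lemma~\ref{D-inv} for the converse. You are in fact more explicit than the paper in justifying why the off-diagonal blocks of $GG^*$ vanish, why $D^*(DD^*)^{-1}D=I$, and why $AA^*$ is invertible (via Theorem~\ref{hull-formula}); the paper simply writes $\det(DD^*)\cdot\det(AA^*)\ne 0$ without further comment.
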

\begin{proof}
    Note that
    \[
    \tilde{G}\tilde{G}^*=\begin{bmatrix}
        G(\text{Hull}(\mathcal{C})) & D\\A & C
    \end{bmatrix}\begin{bmatrix}
        G(\text{Hull}(\mathcal{C}))^* & A^*\\D^* & C^*
    \end{bmatrix}=\begin{bmatrix}
        DD^* & DC^*\\CD^* & AA^*+CC^*
    \end{bmatrix}.
    \]
    By Lemma~\ref{Schur_lemma},
    \begin{align*}
    \det(\tilde{G}\tilde{G}^*)&=\det(DD^*)\cdot\det((AA^*+CC^*)-CD^*(DD^*)^{-1}DC^*)\\&=\det(DD^*)\cdot\det((AA^*+CC^*)-CC^*)\\&=\det(DD^*)\cdot\det(AA^*)\ne 0.
    \end{align*}
    Thus, $\tilde{G}$ generates an LCD code. By Theorem~\ref{shortestLCD}, it is a shortest LCD embedding. It follows from Lemma~\ref{D-inv} that every shortest LCD embedding of $\mathcal{C}$ has a generator matrix of the form
    \[
    \tilde{G}=\begin{bmatrix}
        G(\text{Hull}(\mathcal{C})) & D\\A & C
    \end{bmatrix},
    \]
    where $D$ is invertible.
\end{proof}

\begin{proposition}
    Let $\mathcal{C}$ be an $[n, k]$ code over $F$ with $\ell=\dim({\rm Hull}(\mathcal{C}))$ and $\tilde{\mathcal{C}}$ be a shortest LCD embedding of $\mathcal{C}$. Then $\tilde{\mathcal{C}}$ has a generator matrix of the form
    \[
    \tilde{G}=\begin{bmatrix}
        G({\rm Hull}(\mathcal{C}))' & I_\ell\\A' & \mathcal{O}
    \end{bmatrix},
    \]
    where $G({\rm Hull}(\mathcal{C}))'$ generates ${\rm Hull}(\mathcal{C})$ and
    \[
    G=\begin{bmatrix}
        G({\rm Hull}(\mathcal{C}))'\\A'
    \end{bmatrix}
    \]
    is a generator matrix of $\mathcal{C}$.
\end{proposition}
\begin{proof}
    By Theorem~\ref{construct-LCD}, $\tilde{\mathcal{C}}$ has a generator matrix of the form
    \[
    \tilde{G}_0=\begin{bmatrix}
        G(\text{Hull}(\mathcal{C})) & D\\A & C
    \end{bmatrix},
    \]
    where $G(\text{Hull}(\mathcal{C}))$ is a generator matrix of $\text{Hull}(\mathcal{C})$ and $D$ is an invertible matrix. Define
     \[
    M=\begin{bmatrix}
        I_\ell & \mathcal{O}\\-C & I_{k-\ell}
    \end{bmatrix}\quad\mbox{and}\quad N=\begin{bmatrix}
        D^{-1} & \mathcal{O}\\\mathcal{O} & I_{k-\ell}
    \end{bmatrix}.
    \]
    Since both $M$ and $N$ are invertible,
    \[
    \tilde{G}=MN\tilde{G}_0=\begin{bmatrix}
        D^{-1}G(\text{Hull}(\mathcal{C})) & I_\ell\\A-CD^{-1}G(\text{Hull}(\mathcal{C})) & \mathcal{O}
    \end{bmatrix}
    \]
    generates $\tilde{\mathcal{C}}$. It follows that the matrix
    \[
    G=\begin{bmatrix}
        D^{-1}G(\text{Hull}(\mathcal{C}))\\A-CD^{-1}G(\text{Hull}(\mathcal{C}))
    \end{bmatrix}
    \]
    which is a punctured matrix of $\tilde{G}$ on the last $\ell$ columns generates $\mathcal{C}$ such that $D^{-1}G(\text{Hull}(\mathcal{C}))$ generates $\text{Hull}(\mathcal{C})$.
\end{proof}

\begin{lemma}[\cite{Araya-quaternary-LCD,Niederreiter-ternary-simplex}]\label{simplex-orthogonal}
Let $\mathcal{S}_{q, r}$ be the $q$-ary simplex $[(q^r-1)/(q-1),~ r,~ q^{r-1}]$ code. Then $\mathcal{S}_{3, r}$ is Euclidean self-orthogonal for $r\ge 2$ and $\mathcal{S}_{4, r}$ is Hermitian self-orthogonal for $r\ge 2$.
\end{lemma}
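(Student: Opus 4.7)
The plan is to reduce self-orthogonality to a congruence on the (constant) weight of nonzero codewords of $\mathcal{S}_{q,r}$. It is classical that every nonzero codeword of $\mathcal{S}_{q,r}$ has Hamming weight exactly $q^{r-1}$; this follows because the columns of a standard generator matrix run over projective representatives of $\mathbb{F}_q^r$, so for any nonzero $\mathbf{u}\in\mathbb{F}_q^r$ the codeword $\mathbf{u}G$ attains each nonzero value in $\mathbb{F}_q$ the same number of times, namely $q^{r-1}/(q-1)$ times, giving weight $q^{r-1}$.

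For $\mathcal{S}_{3,r}$ I would first observe that $c_i^2\in\{0,1\}$ with $c_i^2=1$ iff $c_i\neq 0$, so for every $\mathbf{c}\in\mathbb{F}_3^n$ one has $\langle\mathbf{c},\mathbf{c}\rangle_E\equiv \mathrm{wt}(\mathbf{c})\pmod 3$. For $r\ge 2$, $\mathrm{wt}(\mathbf{c})=3^{r-1}\equiv 0\pmod 3$, so each codeword is self-orthogonal. Because $\mathrm{char}(\mathbb{F}_3)\neq 2$, the polarization identity
\[
2\langle\mathbf{c},\mathbf{c}'\rangle_E=\langle\mathbf{c}+\mathbf{c}',\mathbf{c}+\mathbf{c}'\rangle_E-\langle\mathbf{c},\mathbf{c}\rangle_E-\langle\mathbf{c}',\mathbf{c}'\rangle_E
\]
immediately upgrades self-orthogonality of codewords to $\mathcal{S}_{3,r}\subseteq\mathcal{S}_{3,r}^{\perp_E}$.

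For $\mathcal{S}_{4,r}$ I would use the analogous observation that $c_i^3=1$ if $c_i\neq 0$ and $c_i^3=0$ otherwise, so $\langle\mathbf{c},\mathbf{c}\rangle_H=\sum c_ic_i^2=\sum c_i^3\equiv \mathrm{wt}(\mathbf{c})\pmod 2$. For $r\ge 2$, $\mathrm{wt}(\mathbf{c})=4^{r-1}$ is even, so each codeword is Hermitian self-orthogonal. The subtlety here is that straightforward polarization in characteristic two only gives $\langle\mathbf{c},\mathbf{c}'\rangle_H+\langle\mathbf{c}',\mathbf{c}\rangle_H=0$; this is the main obstacle. To finish, I would expand $\langle\mathbf{c}+\omega\mathbf{c}',\mathbf{c}+\omega\mathbf{c}'\rangle_H=0$ with $\omega$ a primitive element of $\mathbb{F}_4$, which produces
\[
\omega^2\langle\mathbf{c},\mathbf{c}'\rangle_H+\omega\langle\mathbf{c}',\mathbf{c}\rangle_H=0.
\]
Combined with the previous identity this yields $(\omega^2+\omega)\langle\mathbf{c},\mathbf{c}'\rangle_H=0$, and since $\omega^2+\omega=1\neq 0$ in $\mathbb{F}_4$, we conclude $\langle\mathbf{c},\mathbf{c}'\rangle_H=0$, proving $\mathcal{S}_{4,r}\subseteq\mathcal{S}_{4,r}^{\perp_H}$.

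In summary, the only nontrivial ingredient is the extra polarization step in the Hermitian, characteristic-two setting; the rest is the elementary constant-weight fact together with the divisibilities $3\mid 3^{r-1}$ and $2\mid 4^{r-1}$ valid for all $r\ge 2$.
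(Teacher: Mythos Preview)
Your argument is correct. The constant-weight property of simplex codes, the computation $\langle\mathbf{c},\mathbf{c}\rangle_E\equiv\mathrm{wt}(\mathbf{c})\pmod 3$ over $\mathbb{F}_3$ and $\langle\mathbf{c},\mathbf{c}\rangle_H\equiv\mathrm{wt}(\mathbf{c})\pmod 2$ over $\mathbb{F}_4$, and the two polarization steps (including the extra one with the scalar $\omega$ in characteristic~$2$) all check out; in particular $\omega^2+\omega=1$ in $\mathbb{F}_4$ gives the conclusion. One tiny quibble: the claim that each nonzero field value is attained exactly $q^{r-1}/(q-1)$ times depends on the choice of projective representatives in the columns of $G$, but this is irrelevant since you only use the weight, which is representative-independent.

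As for comparison with the paper: there is nothing to compare. The paper does not supply a proof of this lemma; it simply cites the result from \cite{Araya-quaternary-LCD,Niederreiter-ternary-simplex}. Your write-up therefore provides a self-contained elementary argument where the paper defers to the literature. An alternative (and perhaps shorter) route, closer in spirit to what one finds in those references, is to compute $GG^*$ directly: since the columns of $G$ run over all projective points, $GG^*=\sum_{[\mathbf{v}]\in\mathbb{P}^{r-1}}\mathbf{v}\mathbf{v}^*$, and one shows this matrix vanishes by a symmetry/counting argument over $\mathbb{F}_q^r$. Your weight-plus-polarization method has the virtue of making the role of $r\ge 2$ and of the characteristic completely transparent.
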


\begin{example}
Let $\mathcal{H}_{q, r}$ be the $q$-ary $[(q^r-1)/(q-1), (q^r-1)/(q-1)-r, 3]$ Hamming code with $q\in\{2, 3, 4\}$. Let $r\ge 3$ if $q=2$, and $r\ge 2$ if $q=3$ or $q=4$. Since the dual of $\mathcal{H}_{q, r}$ is the $q$-ary simplex code $\mathcal{S}_{q, r}$ which is self-orthogonal by Lemma~\ref{simplex-orthogonal}, we have
\[
\dim(\text{Hull}(\mathcal{H}_{q, r}))=\dim\mathcal{S}_{q, r}=r.
\]
By Theorem~\ref{shortestLCD}, a shortest LCD embedding of $\mathcal{H}_{q, r}$ is a $[(q^r-1)/(q-1)+r, (q^r-1)/(q-1)-r]$ code.

As a shortest LCD embedding of $\mathcal{H}_{2, 4}$, we have obtained a $[19, 11, 4]$ LCD code whose generator matrix is given as follows:
    \[
    \setlength{\arraycolsep}{1.2pt}
    G=\left[
    \begin{array}{ccccccccccccccc|cccc}
    1&0 & 1 & 0 & 1 & 0 & 1 & 0 & 1 & 0 & 1 & 0 & 1 & 0 & 1 ~&~ 1 & 0 & 0 & 1 \\
    0 & 1 & 1 & 0 & 0 & 1 & 1 & 0 & 0 & 1 & 1 & 0 & 0 & 1 & 1 ~&~ 0 & 0 & 1 & 1 \\
    0 & 0 & 0 & 1 & 1 & 1 & 1 & 0 & 0 & 0 & 0 & 1 & 1 & 1 & 1 ~&~ 1 & 1 & 0 & 0 \\
    0 & 0 & 0 & 0 & 0 & 0 & 0 & 1 & 1 & 1 & 1 & 1 & 1 & 1 & 1 ~&~ 0 & 0 & 0 & 1 \\
    \hline
    1 & 0 & 0 & 0 & 0 & 0 & 0 & 0 & 0 & 0 & 0 & 0 & 0 & 1 & 1 ~&~ 0 & 0 & 0 & 1 \\
    1 & 1 & 0 & 0 & 0 & 0 & 0 & 0 & 0 & 0 & 0 & 0 & 1 & 1 & 0 ~&~ 0 & 0 & 0 & 0 \\
    0 & 1 & 1 & 0 & 0 & 0 & 0 & 0 & 0 & 0 & 0 & 0 & 0 & 1 & 1 ~&~ 0 & 0 & 0 & 0 \\
    0 & 0 & 1 & 1 & 0 & 0 & 0 & 0 & 0 & 0 & 1 & 0 & 1 & 1 & 1 ~&~ 0 & 0 & 0 & 0 \\
    0 & 0 & 0 & 1 & 1 & 0 & 0 & 0 & 0 & 0 & 0 & 0 & 0 & 1 & 1 ~&~ 1 & 0 & 0 & 1 \\
    0 & 0 & 0 & 0 & 1 & 1 & 0 & 0 & 0 & 0 & 0 & 0 & 1 & 1 & 0 ~&~ 1 & 0 & 0 & 1 \\
    0 & 0 & 0 & 0 & 0 & 1 & 1 & 0 & 0 & 0 & 0 & 0 & 0 & 1 & 1 ~&~ 0 & 0 & 0 & 0
    \end{array}
    \right].
    \]
These $[19, 11, 4]$ codes are distance-optimal and inequivalent to the code in the BKLC database.
\end{example}

\begin{example}
    Let $\mathcal{R}_q(r, m)$ be the $q$-ary $r$-th order generalized Reed-Muller $[n, k]$ code where $n=q^m$ and $q\in\{2, 3, 4\}$. If $r\le\lfloor\frac{m(q-1)-1}{2}\rfloor$, then $\mathcal{R}_q(r, m)$ is self-orthogonal. Thus the length of a shortest LCD embedding of $\mathcal{R}_q(r, m)$ is $n+k$. On the other hand, if $r>\lfloor\frac{m(q-1)-1}{2}\rfloor$, then $\mathcal{R}_q(r, m)$ contains its dual. Since the hull of $\mathcal{R}_q(r, m)$ is its dual, it follows that the length of a shortest LCD embedding of $\mathcal{R}_q(r, m)$ is $n+(n-k)=2n-k$.

    Since $1\le \lfloor\frac{3(3-1)-1}{2}\rfloor=2$, a ternary first-order Reed-Muller code $\mathcal{R}_3(1, 3)$ is a self-orthogonal code with parameters $[27, 4, 18]$. Thus, by Theorem~\ref{shortestLCD}, the length of a shortest LCD embedding of $\mathcal{R}_3(1, 3)$ is $31$. By computational search, we have verified that there are four inequivalent shortest LCD embeddings of $\mathcal{R}_3(1, 3)$, and all of them have minimum distance $19$. A generator matrix of one of these codes is given as follows:
    \[
    \setlength{\arraycolsep}{1.2pt}
    G=\left[
    \begin{array}{ccccccccccccccccccccccccccc|cccc}
    1 & 1 & 1 & 1 & 1 & 1 & 1 & 1 & 1 & 1 & 1 & 1 & 1 & 1 & 1 & 1 & 1 & 1 & 1 & 1 & 1 & 1 & 1 & 1 & 1 & 1 & 1 ~&~ 1 & 1 & 1 & 0 \\
    0 & 0 & 0 & 0 & 0 & 0 & 0 & 0 & 0 & 1 & 1 & 1 & 1 & 1 & 1 & 1 & 1 & 1 & 2 & 2 & 2 & 2 & 2 & 2 & 2 & 2 & 2 ~&~ 0 & 1 & 0 & 0 \\
    0 & 0 & 0 & 1 & 1 & 1 & 2 & 2 & 2 & 0 & 0 & 0 & 1 & 1 & 1 & 2 & 2 & 2 & 0 & 0 & 0 & 1 & 1 & 1 & 2 & 2 & 2 ~&~ 0 & 0 & 1 & 0 \\
    0 & 1 & 2 & 0 & 1 & 2 & 0 & 1 & 2 & 0 & 1 & 2 & 0 & 1 & 2 & 0 & 1 & 2 & 0 & 1 & 2 & 0 & 1 & 2 & 0 & 1 & 2 ~&~ 0 & 0 & 0 & 1
    \end{array}
    \right].
    \]
\end{example}

\section{Optimal LCD codes}

\begin{proposition}\label{ternary-new-code}
    There exists a ternary distance-optimal LCD $[23, 4, 14]$ code. There also exist ternary LCD codes with parameters including $[23, 5, 12]$, $[24, 6, 12]$ and $[25, 5, 14]$ whose minimum distances are one greater than those of the known LCD codes.
\end{proposition}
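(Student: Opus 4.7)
The plan is to realize each of the four ternary LCD codes claimed in the proposition as a shortest LCD embedding of a smaller ternary linear code via the construction in Theorem~\ref{construct-LCD}. For each target triple $[n, k, d] \in \{[23,4,14],\,[23,5,12],\,[24,6,12],\,[25,5,14]\}$, I look for a ternary code $\mathcal{C}$ of length $n - \ell$ and dimension $k$ with minimum distance at least $d$ whose Euclidean hull has dimension exactly $\ell$, so that by Theorem~\ref{shortestLCD} its shortest LCD embedding has the desired length $n$.

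Once such a starting $\mathcal{C}$ is found, I apply Theorem~\ref{construct-LCD}: take a generator matrix in the block form $G = \begin{bmatrix} G(\text{Hull}(\mathcal{C})) \\ A \end{bmatrix}$, then append an invertible $\ell \times \ell$ matrix $D$ and a $(k-\ell) \times \ell$ matrix $C$ to form $\tilde{G} = \begin{bmatrix} G(\text{Hull}(\mathcal{C})) & D \\ A & C \end{bmatrix}$. By Theorem~\ref{construct-LCD}, $\tilde{G}$ generates a length-$n$, dimension-$k$ ternary LCD code $\tilde{\mathcal{C}}$. Since every codeword of $\tilde{\mathcal{C}}$ restricts on the original $n - \ell$ coordinates to a codeword of $\mathcal{C}$, we have the weight bound $d(\tilde{\mathcal{C}}) \ge d(\mathcal{C}) \ge d$. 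Optimality of the $[23, 4, 14]$ code would then be confirmed by comparison with the upper bound $d_3(23, 4) = 14$ from standard code tables, and for the $[23, 5, 12]$, $[24, 6, 12]$, and $[25, 5, 14]$ cases the improvement is certified by comparison with the previously best known LCD minimum distances (each equal to $d - 1$).

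The main obstacle is the first step: locating a starting code of length $n - \ell$ and dimension $k$ that simultaneously has large minimum distance and the prescribed small hull dimension $\ell$. No general construction guarantees both properties at once, so I would perform a computational search --- for instance, by enumerating known optimal or near-optimal ternary codes from code tables (such as~\cite{Grassl-table}) and filtering them via the hull dimension formula $\dim(\text{Hull}(\mathcal{C})) = k - \text{rank}(GG^T)$ from Theorem~\ref{hull-formula}. Once a suitable $\mathcal{C}$ is identified, the remaining free data $(D, C)$ can be chosen essentially at will; in practice, fixing $D = I_\ell$ and sweeping over small choices of $C$ (or over monomial equivalents of $\mathcal{C}$) quickly yields valid embeddings. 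The proof is then completed by exhibiting explicit generator matrices of the four codes, together with a verification --- either by direct weight enumeration or through a computer algebra system --- that each code is LCD and has the claimed minimum distance.
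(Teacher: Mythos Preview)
Your high-level plan --- take a ternary $[n-\ell,k]$ code from the tables, put its generator matrix in hull/complement block form, and append $[D;C]$ as in Theorem~\ref{construct-LCD} --- is exactly what the paper does. The difference is in how you handle the minimum distance, and there the proposal has a real gap.

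You require the seed code $\mathcal{C}$ to already satisfy $d(\mathcal{C})\ge d$ and then invoke the trivial bound $d(\tilde{\mathcal{C}})\ge d(\mathcal{C})$. The paper does \emph{not} do this, and in fact cannot: for all four targets the starting BKLC codes have strictly smaller minimum distance than the goal, namely $[19,4,12]\to[23,4,14]$, $[19,5,11]\to[23,5,12]$, $[20,6,10]\to[24,6,12]$, $[20,5,12]\to[25,5,14]$ (hull dimensions $4,4,4,5$). The whole point of the construction here is that a well-chosen block $\begin{bmatrix}D\\ C\end{bmatrix}$ \emph{increases} the minimum distance; the search is over $(D,C)$, not merely over seed codes, and the final $d$ is certified by direct computation, not by the bound $d(\tilde{\mathcal{C}})\ge d(\mathcal{C})$.

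Your restriction makes the search space essentially empty. For example, for $[23,4,14]$ you would need a ternary $[23-\ell,4,\ge 14]$ code with hull dimension $\ell\ge 1$; since $d_3(21,4)=13$, only $\ell=1$ is even conceivable, and you would then need a $[22,4,14]$ code whose hull is exactly one-dimensional --- a much more delicate existence question than simply grabbing a BKLC code. Similar obstructions occur for the other three targets. So while Theorem~\ref{construct-LCD} guarantees the embedding is LCD for any invertible $D$ and any $C$, you should drop the hypothesis $d(\mathcal{C})\ge d$, allow seeds with smaller minimum distance (as the paper does), and instead search over $(D,C)$ to push $d(\tilde{\mathcal{C}})$ up to the target, verifying the result computationally.
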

\begin{proof}
From the BKLC (best known linear codes) library of MAGMA, we obtain ternary $[19, 4, 12]$, $[19, 5, 11]$, $[20, 6, 10]$ and $[20, 5, 12]$ codes. Using Theorem~\ref{construct-LCD}, we construct the following codes, respectively.
\begin{itemize}
    \item Ternary $[23, 4, 14]$ LCD code $\mathcal C_{3,1}$ with generator matrix
    \[
    \setlength{\arraycolsep}{1.2pt}
    \left[
    \begin{array}{ccccccccccccccccccc|cccc}
    1&0&0&0&2&1&1&2&0&1&0&0&1&2&1&0&1&1&1~&~1&1&2&1\\
    0&1&0&0&2&2&2&1&2&0&0&2&1&1&1&1&1&0&0~&~2&0&1&0\\
    0&0&1&0&2&1&0&1&2&1&1&2&0&2&2&2&0&1&0~&~0&2&1&2\\
    0&0&0&1&0&2&0&2&1&1&1&2&2&2&0&1&1&0&1~&~2&1&0&2
    \end{array}
    \right].
    \]
    For ternary LCD $[23, 4]$ codes, the possible minimum distances were 13-14~\cite{Li-optimal-LCD-2024-1}.
    \smallskip
    \item Ternary $[23, 5, 12]$ LCD code $\mathcal C_{3,2}$ with generator matrix
    \[
    \setlength{\arraycolsep}{1.2pt}
    \left[
    \begin{array}{ccccccccccccccccccc|cccc}
    1&0&0&0&2&1&1&2&0&1&0&0&1&2&1&0&1&1&1~&~0&0&2&2\\
    0&1&0&0&2&2&2&1&2&0&0&2&1&1&1&1&1&0&0~&~2&0&1&1\\
    0&0&1&0&2&1&0&1&2&1&1&2&0&2&2&2&0&1&0~&~0&2&0&2\\
    0&0&0&1&0&2&0&2&1&1&1&2&2&2&0&1&1&0&1~&~2&1&0&2\\
    \hline
    0&0&0&0&1&1&1&1&0&2&1&1&2&0&0&1&0&1&1&~1&2&2&2
    \end{array}
    \right].
    \]
       For ternary LCD $[23, 5]$ codes, the possible minimum distances were 11-13~\cite{Li-optimal-LCD-2024-1}.
         \smallskip
    \item Ternary $[24, 6, 12]$ LCD code $\mathcal C_{3,3}$ with generator matrix
    \[
    \setlength{\arraycolsep}{1.2pt}
    \left[
    \begin{array}{cccccccccccccccccccc|cccc}
    1&0&0&0&0&2&2&1&1&0&0&2&0&2&1&1&2&2&2&0~&~0&0&0&1\\
    0&1&0&2&0&1&0&0&2&0&2&1&1&0&2&1&2&1&0&2~&~2&2&2&1\\
    0&0&1&2&0&0&1&2&2&2&0&2&1&0&2&0&2&0&1&1~&~1&0&2&0\\
    0&0&0&0&1&1&1&1&2&2&2&2&1&1&1&1&0&0&0&0~&~0&1&1&1\\
    \hline
    0&0&0&1&0&2&0&1&1&1&0&0&2&1&0&2&0&0&1&1~&~0&1&0&2\\
    0&0&0&0&0&0&1&2&1&1&0&1&1&1&0&1&2&2&0&2~&~1&1&0&2
    \end{array}
    \right].
    \]
        For ternary LCD $[24, 6]$ codes, the possible minimum distances were 11-13~\cite{Li-optimal-LCD-2024-1}.
      \smallskip
    \item Ternary $[25, 5, 14]$ LCD code $\mathcal C_{3,4}$ with generator matrix
    \[
    \setlength{\arraycolsep}{1.2pt}
    \left[
    \begin{array}{cccccccccccccccccccc|ccccc}
    1&0&0&0&0&2&2&0&0&0&1&1&0&2&1&1&1&2&2&1~&~2&1&0&2&0\\
    0&1&0&0&0&0&0&2&2&2&1&0&0&1&1&2&1&1&1&1~&~1&2&1&2&1\\
    0&0&1&0&0&2&1&2&2&0&2&0&2&2&2&0&0&2&1&1~&~2&0&2&0&0\\
    0&0&0&1&0&2&0&2&1&1&1&2&2&2&0&1&1&0&1&0~&~0&2&1&0&0\\
    0&0&0&0&1&1&1&1&0&2&1&1&2&0&0&1&0&1&1&1~&~2&1&2&2&2
    \end{array}
    \right].
    \]
          For ternary LCD $[25, 5]$ codes, the possible minimum distances were 13-15~\cite{Li-optimal-LCD-2024-1}.
\end{itemize}
This completes the proof.
\end{proof}

\begin{proposition}\label{quaternary-new-code}
    There exists a quaternary distance-optimal LCD $[21, 10, 8]$ code, denoted by $\mathcal C_{4,1}$.
\end{proposition}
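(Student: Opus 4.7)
The plan is to apply Theorem~\ref{construct-LCD} to a suitable quaternary base code, exactly as in the proof of Proposition~\ref{ternary-new-code}. If $\mathcal{D}$ is a quaternary $[n', 10]$ code with full-rank generator matrix, then in any LCD embedding $\tilde{\mathcal{C}}$ of $\mathcal{D}$ every nonzero codeword restricts to a nonzero codeword of $\mathcal{D}$ on the first $n'$ coordinates, so $d(\tilde{\mathcal{C}}) \ge d(\mathcal{D})$. By Theorem~\ref{shortestLCD}, it therefore suffices to exhibit a quaternary $[n', 10]$ code $\mathcal{D}$ with Hermitian hull dimension $\ell_h = 21 - n'$ such that some choice of embedding columns yields a length-$21$ Hermitian LCD code of minimum distance exactly $8$.

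To locate $\mathcal{D}$, I would scan quaternary $[n', 10]$ codes from the BKLC library of MAGMA for $n' \in \{18, 19, 20\}$, compute the Hermitian hull dimension via $\ell_h = 10 - \text{rank}(G \sigma(G)^T)$ from Theorem~\ref{hull-formula}, and retain only those candidates with $n' + \ell_h = 21$ and $d(\mathcal{D}) \ge 7$. For each surviving candidate, I would fix a generator matrix in the block form
\[
G = \begin{bmatrix} G(\text{Hull}(\mathcal{D})) \\ A \end{bmatrix}
\]
and then search over invertible $\ell_h \times \ell_h$ matrices $D$ and $(10 - \ell_h) \times \ell_h$ matrices $C$ over $\mathbb{F}_4$, forming the embedded generator matrix given by Theorem~\ref{construct-LCD} and computing its minimum distance until a choice that reaches $8$ is found.

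Once such a $\tilde{G}$ is produced, the Hermitian LCD property is guaranteed by Theorem~\ref{construct-LCD}, the parameters $[21, 10]$ are immediate from the block shape, and the equality $d(\tilde{\mathcal{C}}) = 8$ is verified directly in MAGMA. Optimality is then certified by comparing against the best known upper bound for quaternary LCD $[21, 10]$ codes recorded in the literature, e.g., in~\cite{Lu-quaternary-LCD}. The principal obstacle is combinatorial: the $(D, C)$ search space has size $4^{\ell_h^2 + (10 - \ell_h)\ell_h}$, which grows rapidly, so success depends on identifying a base code whose hull dimension is small (ideally $\ell_h \in \{1, 2\}$) and whose base minimum distance is already within one or two of $8$, so that the search is both tractable and likely to turn up an embedding that reaches the target distance.
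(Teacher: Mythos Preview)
Your proposal is correct and follows essentially the same approach as the paper: the paper takes the quaternary $[20,10,7]$ code from MAGMA's BKLC library, which has Hermitian hull dimension $\ell_h=1$ (so $n'+\ell_h=21$), applies Theorem~\ref{construct-LCD} to append a single column, and exhibits an explicit generator matrix of a $[21,10,8]$ Hermitian LCD code, with optimality certified via~\cite{Lu-quaternary-LCD}. This is precisely the $\ell_h=1$ case you single out as ideal, so your search would converge on the same construction; the only thing missing from your write-up is the explicit matrix itself.
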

\begin{proof}
From the BKLC library of MAGMA, we obtain a quaternary $[20, 10, 8]$ code. Using Theorem~\ref{construct-LCD}, we construct a quaternary $[21, 10, 8]$ LCD code with generator matrix
    \[
    \setlength{\arraycolsep}{1.2pt}
    \left[
    \begin{array}{cccccccccccccccccccc|c}
    1 & 1 & 1 & 1 & 1 & 1 & 1 & 1 & 1 & 1 & 1 & 1 & 1 & 1 & 1 & 1 & 1 & 1 & 1 & 1 ~& ~1 \\ \hline
    0 & 1 & 0 & 0 & 0 & 0 & 0 & 0 & 0 & 0 & \zeta & 0 & \zeta^2 & 1 & \zeta & 0 & 1 & \zeta & \zeta^2 & \zeta^2 ~&~ 1 \\
    0 & 0 & 1 & 0 & 0 & 0 & 0 & 0 & 0 & 0 & \zeta^2 & 0 & 0 & 1 & \zeta^2 & \zeta^2 & 1 & 1 & \zeta^2 & 0 ~& ~\zeta^2 \\
    0 & 0 & 0 & 1 & 0 & 0 & 0 & 0 & 0 & 0 & \zeta^2 & 1 & 0 & \zeta & \zeta^2 & \zeta & \zeta & 1 & 0 & \zeta^2 ~&~ 1 \\
    0 & 0 & 0 & 0 & 1 & 0 & 0 & 0 & 0 & 0 & 0 & \zeta^2 & 1 & 0 & \zeta & \zeta^2 & \zeta & \zeta & 1 & \zeta^2 ~&~ \zeta^2 \\
    0 & 0 & 0 & 0 & 0 & 1 & 0 & 0 & 0 & 0 & 1 & \zeta^2 & \zeta^2 & \zeta & \zeta^2 & 0 & 1 & \zeta & 0 & \zeta ~&~ 0 \\
    0 & 0 & 0 & 0 & 0 & 0 & 1 & 0 & 0 & 0 & 0 & 1 & \zeta^2 & \zeta^2 & \zeta & \zeta^2 & 0 & 1 & \zeta & \zeta ~&~ 1 \\
    0 & 0 & 0 & 0 & 0 & 0 & 0 & 1 & 0 & 0 & \zeta & 1 & 1 & \zeta & \zeta & 1 & 0 & 0 & \zeta & 0 ~&~ \zeta \\
    0 & 0 & 0 & 0 & 0 & 0 & 0 & 0 & 1 & 0 & \zeta & \zeta^2 & 1 & 0 & \zeta^2 & 1 & \zeta & 0 & \zeta^2 & \zeta ~&~ 1 \\
    0 & 0 & 0 & 0 & 0 & 0 & 0 & 0 & 0 & 1 & \zeta^2 & 0 & \zeta^2 & \zeta^2 & \zeta & \zeta & 0 & \zeta & 1 & 1 ~&~ 0
    \end{array}
    \right],
    \]
where $\zeta$ is a primitive element of $\mathbb{F}_4$. The optimality follows from~\cite{Lu-quaternary-LCD}, and this code is inequivalent to the code in the BKLC database.
\end{proof}

We summarize our new codes in Table~\ref{tab:LCD_table-2}. Here, the first column gives the code name, the second column denotes the field size, the third column denotes our new parameters, the fourth column gives code parameters from BKLC (Magma)  before the embedding, the fifth column denotes the dimension of the hull of BKLC, and the last column gives the previously known parameters with references. For the new parameter sets \([23,4,14]\) and \([23,5,12]\) listed in Table~\ref{tab:LCD_table-2}, we found 226 and 38 inequivalent codes, respectively. The corresponding data can be found at~\cite{An-git}.

\begin{table}[ht]

\centering

\small

\begin{tabular}{c|c|c|c|c||c}

Code & $q$ & New parameters & BKLC & $\ell$  & Known parameters\\

\hline

$\mathcal{C}_{3, 1}$ & 3 & $[23, 4, {\bf{14}}]$ & $[19, 4, 12]$ & 4 & $[23, 4, 13]$~\cite{Li-optimal-LCD-2024-1}\\

$\mathcal{C}_{3, 2}$ & 3 & $[23, 5, {\bf{12}}]$ & $[19, 5, 11]$ & 4 & $[23, 5, 11]$~\cite{Li-optimal-LCD-2024-1}\\

$\mathcal{C}_{3, 3}$ & 3 & $[24, 6, {\bf{12}}]$ & $[20, 6, 10]$ & 4 & $[24, 6, 11]$~\cite{Li-optimal-LCD-2024-1}\\

$\mathcal{C}_{3, 4}$ & 3 & $[25, 5, {\bf{14}}]$ & $[20, 5, 12]$ & 5 & $[25, 5, 13]$~\cite{Li-optimal-LCD-2024-1}\\

$\mathcal{C}_{4, 1}$ & 4 & $[21, 10, {\bf{8}}]$ & $[20, 10, 8]$ & 1 & $[21, 10, 7]$~\cite{Lu-quaternary-LCD}\\

\end{tabular}

\caption{New ternary and quaternary LCD codes}
\label{tab:LCD_table-2}
\end{table}

\section{Conclusion}

In this paper, we have determined the minimum number of columns required to embed a binary, ternary or quaternary linear code into an LCD code. For an $[n, k]$ code $\mathcal{C}$ whose hull has dimension $\ell$, we have proved that the length of a shortest LCD embedding of $\mathcal{C}$ is $n+\ell$. Furthermore, when a generator matrix is decomposed into its hull part and LCD part, we have shown that appending an invertible matrix to the rows corresponding to the hull and an arbitrary matrix to the remaining rows always yields a shortest LCD embedding. Using this construction, we have obtained several new ternary and quaternary LCD codes. 

However, since a shortest LCD embedding can always be obtained by using any invertible $\ell \times \ell$ matrix and any $(k-\ell)\times \ell$ matrix, the search space is excessively large, and hence there is a limitation in determining a shortest LCD embedding of a given code with the largest minimum distance. Therefore, a possible direction for future research is to develop a method for choosing the appended matrices so that the resulting code has better minimum distance, or to determine whether the appended matrices can be further restricted up to equivalence.

\section*{Acknowledgments}
Jon-Lark Kim was supported in part by the BK21 FOUR (Fostering Outstanding Universities for Research) funded by the Ministry of Education (MOE, Korea), National Research Foundation of Korea (NRF) under Grant No. 4120240415042, Basic Science Research Program through the National Research Foundation of Korea (NRF) funded by the Ministry of Science and ICT under Grant No. RS-2025-24534992, National Research Foundation of Korea under Grant No. RS-2024- NR121331, and Global - Learning \& Academic research institution for Master’s·PhD students, and Postdocs(LAMP) Program of the National Research Foundation of Korea(NRF) grant funded by the Ministry of Education (No. RS-2024-00441954).

\end{document}